\newtheorem{lemma}{Lemma}  
\newtheorem{corollary}{Corollary}
\newcommand{\pr}{\textrm{pref}}
\newcommand{\ov}{\textrm{ov}}
\newcommand{\pe}{\textrm{period}}
\newcommand{\cn}{\textrm{\it sp}}
\newcommand{\opt}{\mbox{\it OPT}}
\newcommand{\opts}{\mbox{\it OPT}_{\sigma}}
\newcommand{\be}{\beta}
\newtheorem{theorem}{Theorem}  
\newenvironment{proof}{\noindent {\it Proof.}}{$\Box$\vskip1ex}
\begin{document}

\title{A note on the shortest common superstring\\ of NGS reads}

\author{Tristan Braquelaire\thanks{LaBRI and CBiB, University of Bordeaux, France.} \and Marie Gasparoux$\,^*$ \and Mathieu Raffinot\thanks{CNRS, LaBRI and CBiB, University of Bordeaux, France.} \and Raluca Uricaru$\,^*$}

\maketitle

\begin{abstract}
The Shortest Superstring Problem (SSP) consists, for a set of strings
$S = \{s_1,\cdots,s_n\}$, to find a minimum length string that
contains all $s_i, 1\leq i \leq k$, as substrings.

 This problem is proved to be {\em NP-Complete} and APX-hard. Guaranteed
 approximation algorithms have been proposed, the current best ratio
 being $2\frac{11}{23}$, which has been achieved following a long and
 difficult quest. However, SSP is highly used in practice on next
 generation sequencing (NGS) data, which plays an increasingly important role in sequencing. In this note, we show that the SSP approximation ratio can be
 improved on NGS reads by assuming specific characteristics of NGS data
 that are experimentally verified on a very large sampling set.
\end{abstract}

\section{Introduction}

The Shortest Superstring Problem (SSP) consists, for a set of strings
$S = \{s_1,\cdots,s_n\}$, in constructing a string $s$ such that any
element of $S$ is a substring of $s$ and $s$ is of minimal length.
For an arbitrary number of sequences $n$, the problem is known to be {\em NP-Complete}~\cite{GALLANT198050, Garey1990}
and APX-hard~\cite{Blum:1994}. Lower bounds for the achievable approximation ratios
on a binary alphabet have been given by
Ott~\cite{Ott1999}. The best known approximation ratio so far is $2
\frac{11}{23} \approx  2.478$ \cite{Mucha13} after a long series of improvements
\cite{l-tdstls-90, Blum:1994,KPS94,
  Armen1995,Armen199829,Breslauer1997340,
  Czumaj199774,Sweedyk:1999,TengY97, KaplanS05, PaluchEZ12}.

In the meantime, SSP compression algorithms have been designed as
sub-routines of the previous ones. The idea is to ensure a fixed
compression ratio between the sum of the lengths of the sequences of
the set and the optimal superstring on this set. The greedy algorithm
is such a compression algorithm that is proven to achieve a
compression ratio of at least $\frac{1}{2}$, while the best
compression algorithm achieves a ratio of $\frac{38}{63}$
\cite{KPS94}.

In this note, we focus on practical applications of SSP, like assembling biological sequences, mostly DNA sequences with an alphabet of $\{A,C,G,T\}$ named {\em bases}, but also on proteome
sequences with a 26 letter alphabet corresponding to {\em amino acids}. SSP is
used in contig reconstruction step, contigs that subsequently need to be organised.

Over the past decade, the landscape of sequencing and assembly deeply
changed, with the increasing development of Next Generation Sequencing (NGS)
devices. These relatively cheap devices produce, from a ``soup'' of cells, millions of randomly read, short, equal length DNA sequences in a single {\em run}.
Each sequence is typically 32 to 1000 bases long, with a small and still decreasing cost per
base. Such sequences are named {\em reads}. NGS technology allows to
tackle new challenges in biology and medecine; the exponential
increase of sequencing demands leads to the creation of more and
more sequencing platforms, dealing with NGS data at 99\%.

Considering the specificity of read sequences, is it possible to propose better
approximation algorithms for this type of data? This research,
similar to the one targeting better algorithms for small-world graphs in
social networks, aims to better suit the actual data.

This note is a first step in this direction. We first model the read
sequences more finely thus, according to our examples,
better matching the experimental data.  Then, we derive a better
approximation ratio algorithm by using the properties of the
reads. For instance, on the set SRR069579, we reach a $2.0738$
approximation ratio (see Table \ref{SRR069579}). To our knowledge, the
only related work is \cite{GolovnevKM13}, where the sequences have the
same length. Up to $7$ bases, they propose a better approximation
ratio based on De Bruijn graphs. However, these sequences are way
shorter than real-world reads.

Note that some theoretical variations of SSP have also been studied
\cite{Yu16a, CrochemoreCIKRRW10}. Here we do not dwell on these
studies since their focus is far from ours, neither do we detail the
greedy algorithm approximation conjecture, which is a subject by
itself \cite{TARHIO1988131,KaplanS05,FiciKRRW16}.

\section{Modeling of reads}

NGS reads have some specific properties that we model and exhibit on
real sets of reads.

For a string $s$ of length $n$, any integer $1 \le p \le m$ is a
\emph{period} of $s$ if $s[i] = s[i+p]$ for all $1 \le i \le
m-p$. Note that $s$ always has at least one period, corresponding to its
length. The smallest period of $s$ is called \emph{the period} of $s$,
and denoted $\mbox{\em period}(s)$.

We consider SSP on $n$ reads $S=\{s_1,s_2,\ldots ,s_n\}$ of length
$m>0$, where $m\ll n$. We now consider the period of each read. We
denote $n(i)$, $1 \leq i \leq m$, the number of reads of period $i$. 


Let $0 \leq \alpha \leq 1$ be a parameter and let $\cn =
\sum_{i=1}^{m\alpha} \frac{n(i)}{i}$. We express $\cn$ (for {\em s}mall {\em p}eriod) as a percentage
of $\frac{n}{m}$ relatively to the value of $\alpha$ and we
denote  $\cn = \mbox{\em perc}_{\alpha} \frac{n}{m}.$\\

A strong characteristic of a set of reads is that even for ratios $ 0.8
< \alpha < 1 $, $\cn$ is very small compared to $n$. The order of
magnitude is $\cn$ being a few per hundred of $\frac{n}{m}$. For a
large panel of sets of reads on which we tested our approach, we found
for $ 0.8 < \alpha < 1 $ a $\cn$ value inferior to $0.02 \, n/m$. In
Figure \ref{plotsreads}, we show such four sets of reads with lengths
of 32, 36, 98 and 200.




\begin{figure}[t]
\begin{centering}
\includegraphics[width=5cm]{./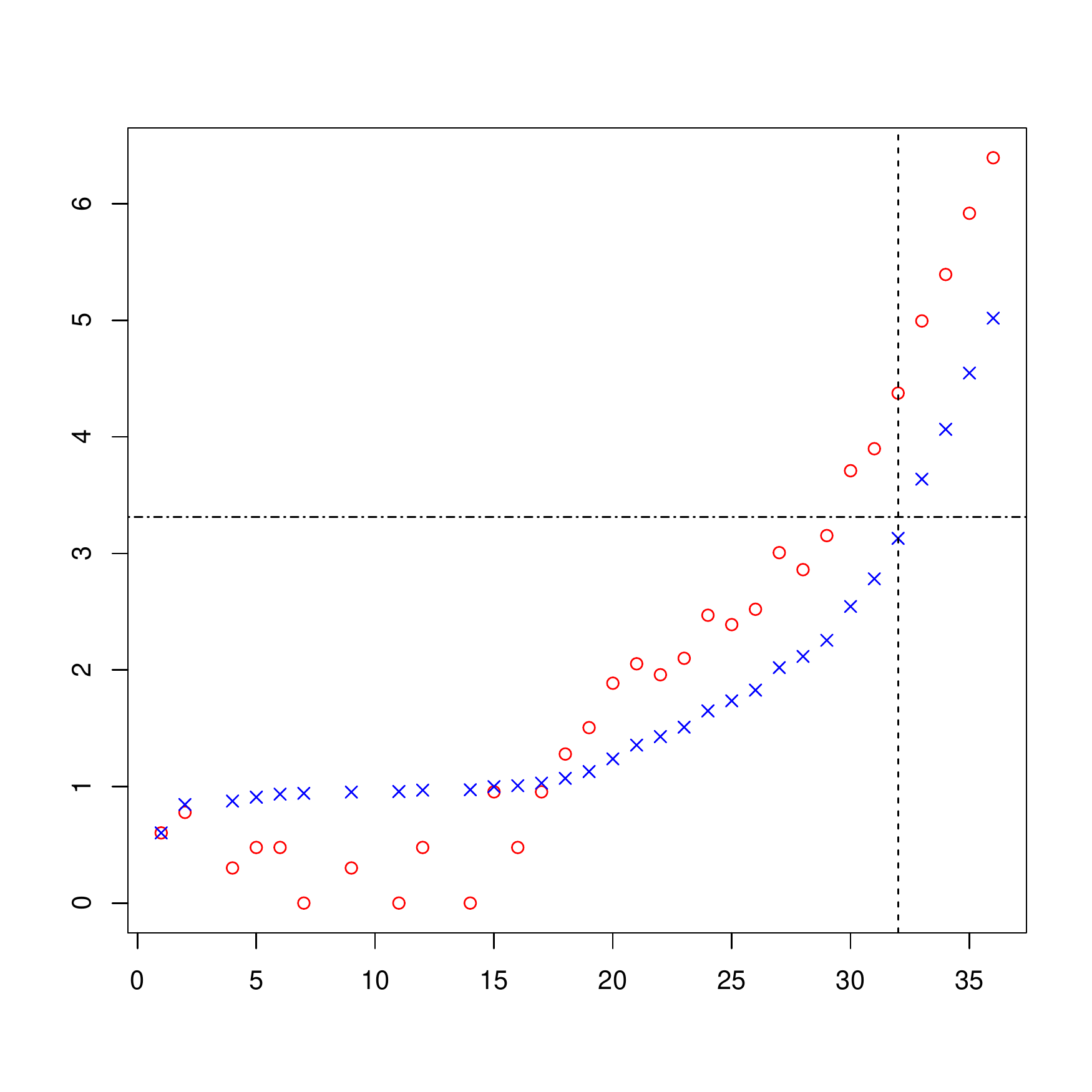}
\includegraphics[width=5cm]{./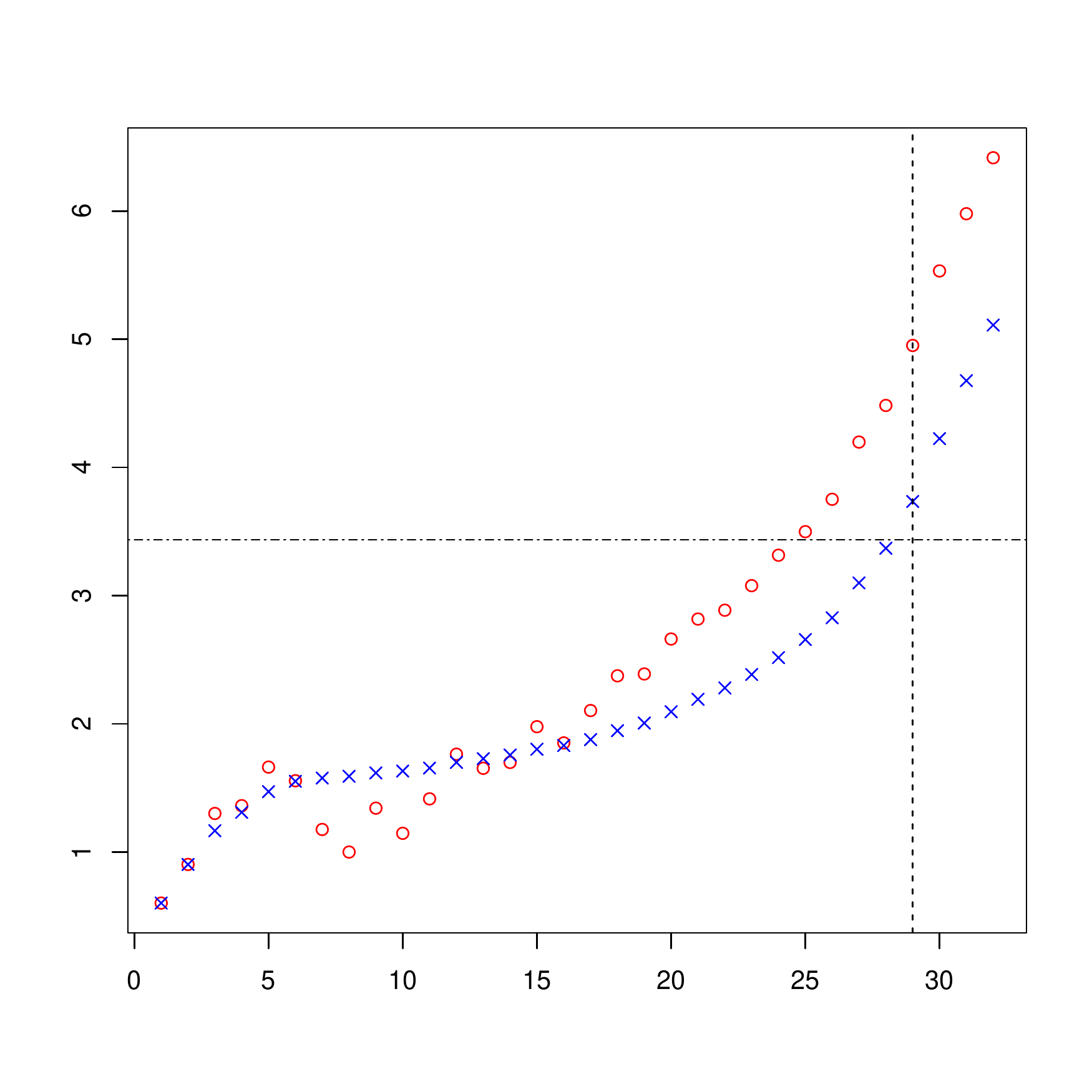}\\[-4mm]
\includegraphics[width=5cm]{./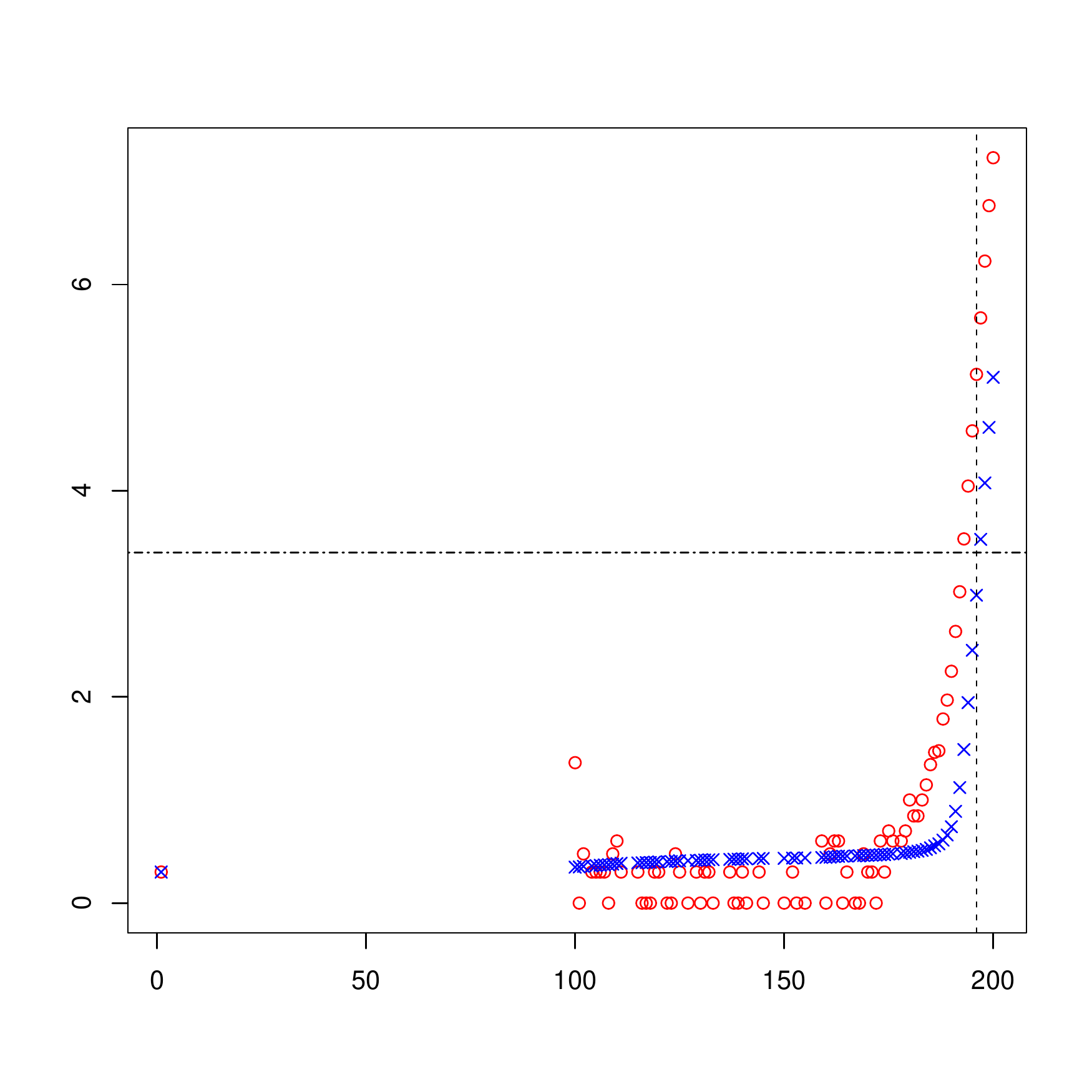}
\includegraphics[width=5cm]{./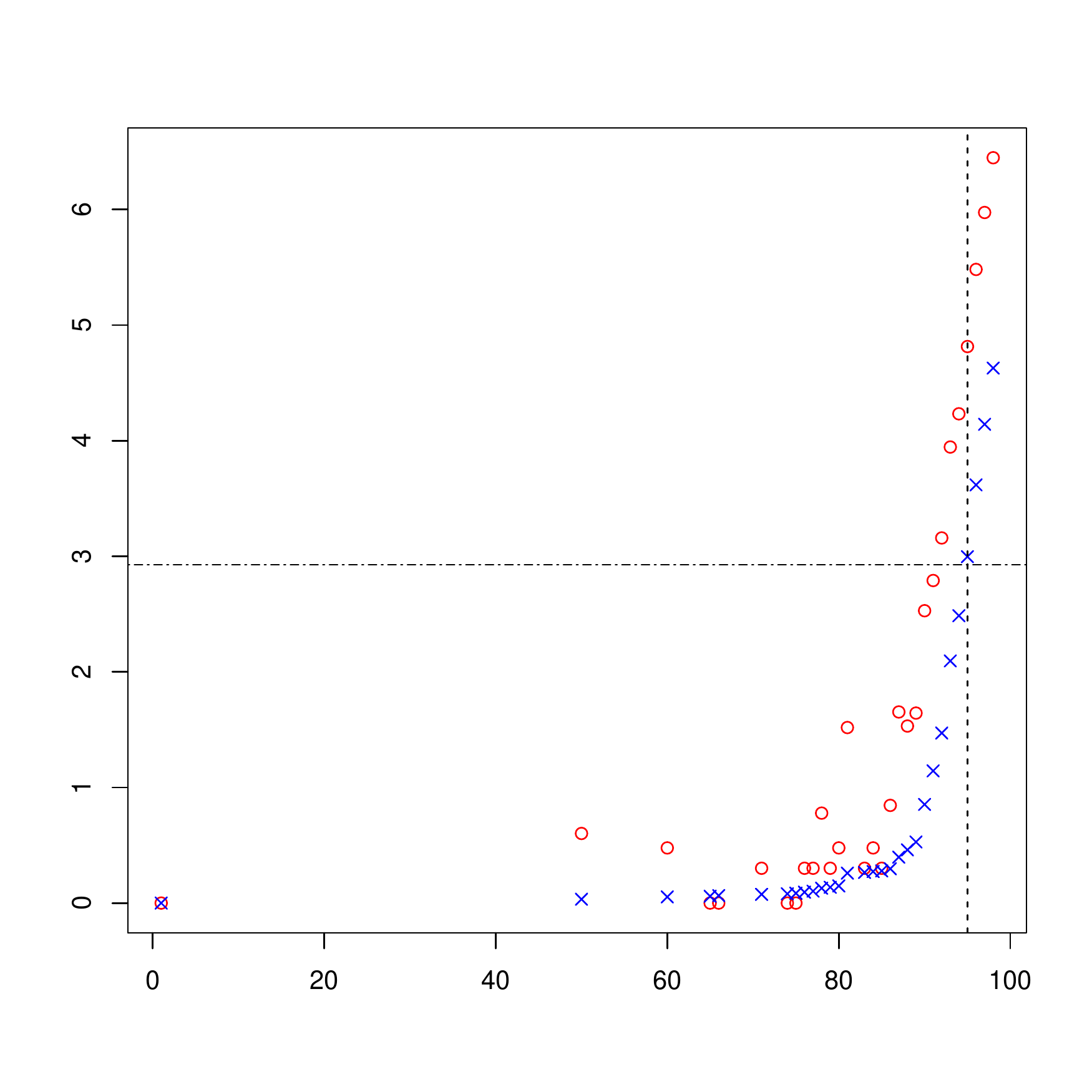}
\vspace*{-0.3cm}
\caption{Sets of reads from left to right, top to bottom: SRR069579
  (human), ERR000009 (yeast), SRR211279 (human), SRR959239
  (human). The $x$-axis is the period and the $y$-axis is in $\log_{10}$
  scale. The circles represent $n(x)$ and the crosses $\sum_{i=1}^{x} \frac{n(x)}{x}.$ The dash
  vertical line corresponds to the final $m\alpha$ (computed in the
  experimental results in Section \ref{exprest}) and the horizontal to
  $0.02\frac{n}{m}.$}
\label{plotsreads}
\end{centering}
\end{figure}

\section{Approximation algorithm}

For two strings $u,v$ we define the \emph{overlap} of $u$ and $v$,
denoted $\ov(u,v)$, as the longest suffix of $u$ that is also a prefix
of $v$. Also, we define the \emph{prefix} of $u$ relatively to $v$, denoted
$\pr(u,v)$, as the string $x$ such that $u=x\,\ov(u,v)$, {\em i.e.}, the
prefix of $u$ that does not overlap $v$.

The \emph{prefix graph} (also called the \emph{distance graph}) of $S$
is a complete directed graph with the vertex set $S$ and the edges
$(s_i,s_j)$ of weight equal to the length $|\pr(s_i,s_j)|$.


We consider the classical algorithm of
\cite{Blum:1994,vazirani}, which gives a general framework. This algorithm is proved to be a 3 approximation algorithm in the general case. We prove below that applied on NGS data the
approximation factor can be improved. The scheme of the algorithm is the following:

\begin{enumerate}
\item Compute a maximal cycle decomposition on the prefix graph
\item For each cycle $c_i$ choose one of the strings in $c_i$ as a representative string $r_i$.
\item $\sigma_i= (\pr(r_i,r_{i+1}) \cdot ... \cdot \pr(r_k,r_{i})) \cdot r_i$ (cycle from $r_i$ concatenated with $r_i$).
\item Let  $S_{\sigma} = \{ \sigma_i \}$ and $w_{\sigma}$ as a concatenation of all $\sigma_i$.
\item Compress $w_{\sigma}$ using an SSP compression algorithm.
\end{enumerate}

The cycle decomposition produces cycles of several lengths. The {\em
  period} of a cycle is given by its length. We split the set of cycles, in
two parts, the {\em small} cycles of period less than or equal to
$m\alpha$, and the larger ones, denoted {\em large}. We now focus on
the number of {\em small} cycles. The weight of a cycle is the sum of
the weights of its edges and let $\mbox{wt}(\mathcal{C})$ be the sum of
the weights of all cycles.

\begin{lemma}
\label{cgreaters}
Let $c \in \mathcal{C}$ be a cycle and $s$ a sequence in the
cycle,  then $\mbox{period}(c)\geq \mbox{period}(s)$
\end{lemma}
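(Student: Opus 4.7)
The strategy is to build a single string associated to the cycle $c$, of length $\mbox{period}(c) + m$, that contains $s$ as a substring and admits $\mbox{period}(c)$ as a period; the lemma then drops out of the definition of the period of a string. Write $c$ cyclically as $s_0, s_1, \ldots, s_{k-1}$ (indices modulo $k$), assume without loss of generality that $s = s_0$, and set
\[
W \;:=\; \sum_{i=0}^{k-1} |\pr(s_i, s_{i+1})|.
\]
By the convention given just before the lemma (``the period of a cycle is given by its length''), $W = \mbox{period}(c)$.

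Consider the length-$(W+m)$ string
\[
\sigma \;:=\; \pr(s_0, s_1) \cdot \pr(s_1, s_2) \cdots \pr(s_{k-1}, s_0) \cdot s_0.
\]
By construction, positions $W+1, \ldots, W+m$ of $\sigma$ form $s_0$; the core step is to show that positions $1, \ldots, m$ of $\sigma$ also form $s_0$. I would argue this by unwinding the cycle once: from $s_i = \pr(s_i, s_{i+1}) \cdot \ov(s_i, s_{i+1})$ together with the fact that $\ov(s_i, s_{i+1})$ is a prefix of $s_{i+1}$, one deduces that $s_0$ is a prefix of $\pr(s_0, s_1) \cdot s_1$, hence of $\pr(s_0, s_1) \cdot \pr(s_1, s_2) \cdot s_2$, and, after $k$ such substitutions, of $\sigma$. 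The two copies of $s_0$ sitting at positions $1$ and $W+1$ then yield $\sigma[i] = \sigma[i+W]$ for every $i \in \{1, \ldots, m\}$; as $|\sigma| = W + m$, this is exactly the statement that $W$ is a period of $\sigma$.

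To conclude, I split on whether $W \leq m$ or not. If $W \leq m$, the periodicity relation for $\sigma$ restricts to its length-$m$ prefix: for $1 \leq i \leq m - W$, $s[i] = \sigma[i] = \sigma[i+W] = s[i+W]$, so $W$ is a period of $s$ and $\mbox{period}(s) \leq W = \mbox{period}(c)$. If $W > m$, then trivially $\mbox{period}(s) \leq m < W$. Either way $\mbox{period}(c) \geq \mbox{period}(s)$. The only step that requires any care is the unwinding argument for the first-$m$-positions claim, which boils down to $k$ applications of the definitions of $\pr$ and $\ov$; I do not anticipate any genuine obstacle beyond this bookkeeping.
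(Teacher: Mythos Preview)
Your argument is correct and is essentially the paper's own proof made explicit: the paper simply asserts that ``each sequence in the cycle can be expressed by turning around the cycle'' and concludes by contradiction that $\mbox{period}(c)$ is a period of $s$, whereas you construct the string $\sigma$ concretely, verify that $s_0$ sits at both ends, and read off the periodicity directly. The only difference is presentational---your case split $W\le m$ versus $W>m$ and the $k$-step unwinding spell out what the paper leaves to the reader.
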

\begin{proof}
Each sequence in the cycle can be expressed by turning around the cycle. If $\mbox{period}(c) < \mbox{period}(s)$, then
$\mbox{period}(c)$ is also a period of $s$, which is smaller than its smallest
period, contradiction.
\end{proof}

\begin{corollary}
\label{corocgreaters}
Let $c \in \mathcal{C}$ be a cycle and $s_1\ldots s_k$ the sequences
in $\mathcal{C}$. Then $\mbox{period}(c)\geq
\mbox{max}\{\mbox{period}(s_i)\}.$
\end{corollary}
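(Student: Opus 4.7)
The statement is essentially a direct consequence of Lemma~\ref{cgreaters}, taken over all sequences $s_i$ in the cycle $c$ (I read the ``$s_1,\ldots,s_k$ the sequences in $\mathcal{C}$'' in the statement as ``the sequences in $c$,'' since otherwise the claim would not follow from Lemma~\ref{cgreaters} alone and would in general be false for cycles other than $c$).

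The plan is therefore to simply lift Lemma~\ref{cgreaters} from a single sequence to the maximum over all sequences of the cycle. Concretely, I would first fix the cycle $c\in\mathcal{C}$ and enumerate the sequences $s_1,\ldots,s_k$ that occur on it. Then, for each index $i\in\{1,\ldots,k\}$, Lemma~\ref{cgreaters} applied to the pair $(c,s_i)$ yields $\mbox{\em period}(c)\ge \mbox{\em period}(s_i)$. Since this inequality holds uniformly in $i$ with the same left-hand side, I can take the maximum over $i$ on the right-hand side and conclude $\mbox{\em period}(c)\ge \max_i\{\mbox{\em period}(s_i)\}$, which is exactly the corollary.

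There is no real obstacle: the argument is a one-line quantifier step, and no new combinatorial or periodicity fact is needed beyond Lemma~\ref{cgreaters}. The only subtle point worth mentioning is the interpretation of the indexing (sequences of the cycle $c$ rather than of the whole decomposition $\mathcal{C}$); once this is clarified, the proof is immediate.
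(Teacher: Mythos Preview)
Your proposal is correct and matches the paper's own proof, which simply states that the corollary ``directly derives from Lemma~\ref{cgreaters}.'' Your reading of ``sequences in $\mathcal{C}$'' as ``sequences in $c$'' is also the intended one.
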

\begin{proof}
Directly derives from lemma \ref{cgreaters}.
\end{proof}

\begin{lemma}
\label{cyclesmallperiod}
Let $c \in \mathcal{C}$ be a cycle and $s_1\ldots s_k$ the sequences
in $\mathcal{C}$. If $\mbox{period}(c)\leq m \alpha$, $\mbox{period}(s_i) \leq m \alpha.$
\end{lemma}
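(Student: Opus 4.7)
The plan is to observe that this lemma is essentially an immediate consequence of Lemma \ref{cgreaters} (or equivalently Corollary \ref{corocgreaters}), so no new combinatorial work on periods is required. Specifically, Lemma \ref{cgreaters} already establishes that for any sequence $s$ appearing in a cycle $c$, we have $\mbox{period}(c) \geq \mbox{period}(s)$. So the task here is only to chain this inequality with the hypothesis $\mbox{period}(c) \leq m\alpha$.

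First, I would fix an arbitrary sequence $s_i$ among $s_1,\dots,s_k$ in the cycle $c$. Then I would invoke Lemma \ref{cgreaters} on $c$ and $s_i$ to get $\mbox{period}(s_i) \leq \mbox{period}(c)$. Combining with the hypothesis $\mbox{period}(c) \leq m\alpha$ yields $\mbox{period}(s_i) \leq m\alpha$, which is the desired conclusion. Since $s_i$ was arbitrary, the bound holds for every sequence in the cycle.

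There is no real obstacle here: the only thing to be careful about is to apply Lemma \ref{cgreaters} uniformly over all sequences in the cycle, which is exactly what Corollary \ref{corocgreaters} does by taking the maximum. One could equivalently state the proof as: by Corollary \ref{corocgreaters}, $\max_i \mbox{period}(s_i) \leq \mbox{period}(c) \leq m\alpha$, so every individual $\mbox{period}(s_i)$ is bounded by $m\alpha$. This is a one-line derivation and mainly serves as a convenient reformulation of Lemma \ref{cgreaters} that will be used in the subsequent counting of small cycles.
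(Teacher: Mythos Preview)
Your proposal is correct and follows the same approach as the paper: the paper's proof simply invokes Corollary~\ref{corocgreaters} to conclude that every $\mbox{period}(s_i)$ is bounded by $\mbox{period}(c)$, which by hypothesis is at most $m\alpha$. Your version spells this out slightly more explicitly (fixing an arbitrary $s_i$ and chaining the inequalities), but the substance is identical.
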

\begin{proof}
By corollary \ref{corocgreaters}, the periods of all sequences in a cycle are smaller than or equal to the period of the cycle. 
\end{proof}


\begin{corollary}
\label{nbsmallcycles}
Let $1 \leq i \leq m$, the maximal number of cycles of period less or equal to $i$ is bounded by $\frac{1}{2}\sum_{k=1}^i n(i).$
\end{corollary}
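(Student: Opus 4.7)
My plan is a double-counting argument on the sequences appearing in short-period cycles. For any cycle $c \in \mathcal{C}$ with $\mbox{period}(c) \leq i$, Lemma~\ref{cyclesmallperiod} guarantees that every sequence lying on $c$ has period at most $i$. Since cycles in a decomposition are vertex-disjoint, the sequences in the union of all cycles of period $\leq i$ are pairwise distinct and therefore form a subset of the sequences of period $\leq i$, a set whose size is $\sum_{k=1}^{i} n(k)$.

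Next, because the period of a cycle is defined as its length, each cycle of period $p \leq i$ uses exactly $p$ vertices. Assuming the decomposition consists of proper cycles of length at least $2$ — which is the standard convention for the Blum-style cycle cover used by the algorithm — every such cycle contributes at least two distinct sequences to the count above. Combining the two bounds gives
\[
2 \cdot |\{c \in \mathcal{C} : \mbox{period}(c) \leq i\}| \;\leq\; \sum_{k=1}^{i} n(k),
\]
from which the stated inequality follows immediately.

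The only delicate point I foresee is the tacit assumption that the cycle decomposition contains no length-$1$ cycles (self-loops). This is standard for Blum-style cycle covers — by Corollary~\ref{corocgreaters} a self-loop could occur only on a sequence whose period is $1$, and in that case it offers no overlap saving and can be broken without loss — but it should be acknowledged explicitly, since without it the factor $\frac{1}{2}$ need not hold in pathological edge cases. Once this convention is fixed, the proof reduces to the one-line pigeonhole argument above, and no induction or case analysis on the structure of individual cycles is needed.
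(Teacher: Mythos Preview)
Your core argument---cycles of period at most $i$ contain only sequences of period at most $i$ (Lemma~\ref{cyclesmallperiod}), cycles are vertex-disjoint, each contains at least two sequences, hence the factor $\tfrac12$---is exactly the paper's proof.

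Two side remarks in your write-up are inaccurate, though they do not undermine the main argument. First, in this paper the \emph{period} of a cycle is its total weight in the prefix graph (this is what makes the proof of Lemma~\ref{cgreaters} work: turning once around the cycle advances a string by exactly that weight), not its vertex count; so a cycle of period $p$ does \emph{not} have $p$ vertices in general. Second, and consequently, your claim that a self-loop can occur only on a sequence of period $1$ is wrong: a self-loop on $s$ has weight $|\pr(s,s)|=\pe(s)$, which can be any value up to $m$. The assumption that every cycle contains at least two sequences is simply asserted by the paper as well; your instinct to flag it is sound, but the justification you offer for why it is harmless does not hold.
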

\begin{proof}
A cycle contains at least two sequences. By Lemma
\ref{cyclesmallperiod}, all the sequences in $c$ of period $i$ must have a period
less than or equal to $i$ and there are only $\frac{1}{2}\sum_{k=1}^i n(i)$ such sequences.
\end{proof}


\subsection{Analysis of the algorithm}

We bounded the number of {\em small} cycles relatively to
$\alpha$. Let us now take this into account while analysing the
approximation algorithm. Obviously, $w_{\sigma}$ is a
superstring of $S$. Let us bound its size.


\begin{lemma}
$$|w_{\sigma}|= \sum \left| \sigma_i \right| \leq \mbox{wt}(\mathcal{C})
  + \mbox{wt}(\mathcal{C}) \frac 1 \alpha + \frac {\cn \cdot m} 2 \leq (1+\frac
  1 \alpha) \opt + \frac{\cn \cdot m}{2}$$ 
\end{lemma}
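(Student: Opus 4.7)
My plan is to decompose $\sum_i |\sigma_i|$ into the cycle-cover weight plus a term linear in the number of cycles, then bound the cycle count using the split between small and large cycles. First I would unfold the definition of $\sigma_i$ from step 3 of the algorithm: it is the concatenation of the prefix edges traversed around the cycle $c_i$, followed by the representative $r_i$ of length $m$. Thus $|\sigma_i| = \mbox{wt}(c_i) + m$, and summing over the $k$ cycles gives
\[
  \sum_i |\sigma_i| \;=\; \mbox{wt}(\mathcal{C}) + k\,m.
\]

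Next I would split $k = k_s + k_L$ according to whether the cycle's period is at most $m\alpha$ (small) or strictly greater (large). Each large cycle has weight strictly greater than $m\alpha$, so a pigeonhole step gives $k_L \cdot m\alpha < \mbox{wt}(\mathcal{C}_L) \le \mbox{wt}(\mathcal{C})$, hence $k_L\,m \le \mbox{wt}(\mathcal{C})/\alpha$. For small cycles I would apply Corollary \ref{nbsmallcycles} with $i = m\alpha$ to conclude $k_s \le \cn/2$, and therefore $k_s\,m \le \cn \cdot m/2$. Substituting both bounds into the identity above yields the first inequality of the lemma.

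The second inequality then follows from the classical lower bound $\mbox{wt}(\mathcal{C}) \le \opt$: any optimal superstring induces a traversal of the prefix graph of weight at most $\opt$, and the minimum-weight cycle cover computed in step 1 is no heavier. Substituting $\mbox{wt}(\mathcal{C}) \le \opt$ into the two occurrences of $\mbox{wt}(\mathcal{C})$ yields $(1 + 1/\alpha)\opt + \cn\,m/2$. The step I expect to require the most care is the small-cycle accounting, since $\cn$ is the period-weighted sum $\sum n(i)/i$ rather than the naive sequence count $\sum n(i)$; applying Corollary \ref{nbsmallcycles} must be done in a way that is consistent with the per-sequence weighting $1/p(s)$, using Lemma \ref{cyclesmallperiod} to ensure that every sequence in a small cycle genuinely contributes to $\cn$.
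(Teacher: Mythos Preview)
Your overall decomposition $\sum_i |\sigma_i| = \mbox{wt}(\mathcal{C}) + k\,m$ and the pigeonhole bound $k_L\,m \le \mbox{wt}(\mathcal{C})/\alpha$ for the large cycles are exactly what the paper does. The gap is precisely the one you flag in your last paragraph and then leave open: Corollary~\ref{nbsmallcycles} applied with $i=m\alpha$ bounds the number of small cycles by $\tfrac{1}{2}\sum_{k\le m\alpha} n(k)$, \emph{not} by $\cn/2=\tfrac{1}{2}\sum_{k\le m\alpha} n(k)/k$. Since the latter is the strictly smaller quantity, the inference ``$k_s\le \cn/2$'' simply does not follow from that corollary, and multiplying by $m$ gives you the wrong term.

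The paper does not try to bound $k_s$ globally and then multiply by $m$. It runs a period-stratified worst-case argument instead: as one moves from period $k-1$ to period $k$, the upper bound of Corollary~\ref{nbsmallcycles} can grow by at most $n(k)$ (before the factor $1/2$), and a representative sitting on a cycle of period $k$ is read by turning $m/k$ times around that cycle. Charging each such ``new'' cycle $m/k$ rather than the flat $m$, and summing over $2\le k\le m\alpha$, yields $\tfrac{1}{2}\sum_{k} n(k)\,m/k = \cn\,m/2$. In other words, the $1/k$ weighting that defines $\cn$ enters through the period-dependent cost of expressing the representative, not through a count of cycles; this is the idea your proposal is missing.
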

\begin{proof}
A $\sigma_i$ is formed on a cycle as $(\pr(r_i,r_{i+1}) \cdot
... \cdot \pr(r_k,r_{i})) \cdot r_i.$ We first sum over all the
$\sigma_i$ the prefixes of each $\sigma_i$ corresponding to the
cycle. This leads to a first global $\mbox{wt}(\mathcal{C}).$ Then we
consider the sizes of the $r_i$ for the large cycles. The point is that
all $r_i$ have the same length $m$, and that each $r_i$ can be represented (or {\em expressed})
by turning around the cycle it corresponds to (see Figure \ref{expressing}). As {\em large} cycles have a
period of at least $m \alpha$, turning $\frac{1}{\alpha}
\pe(c_i)$ around the cycle $c_i$ is enough to read $r_i$. Thus, the sum over all
   {\em large} cycles of $[r_i|$ is bounded by $\mbox{\em wt}(\mathcal{C})
     \frac 1 \alpha.$ 

\begin{figure}[h]
\centering 
\includegraphics[width=6cm]{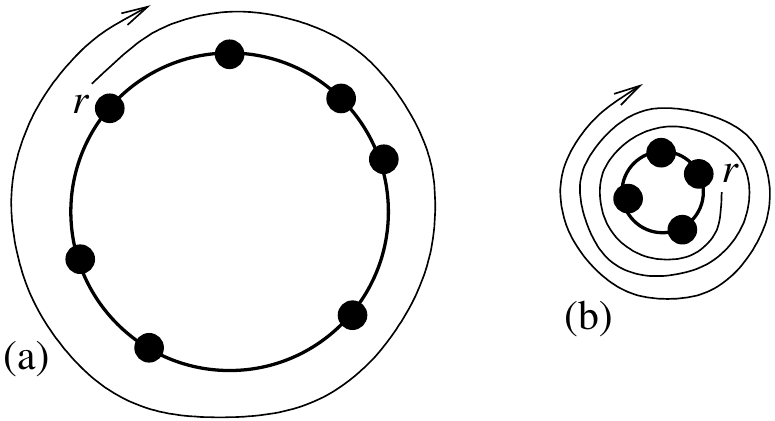}
\caption{Expressing a representative $r$ over the cycle it belongs to. The
  period of cycle (a) is larger than $m\alpha$ and thus the expression
  of $r$ requires only $1/\alpha$ cycles. The period of cycle (b) is $2 \leq i <
  m\alpha$ and the expression of $r$ thus requires $m/i$ cycles, which can be at maximum $m/2$}
 \label{expressing}
\vspace{-0.3cm}
\end{figure}

The remaining step is counting the sum of the $r_i$ corresponding to $n_{\alpha}$ small cycles.

As, by corollary \ref{nbsmallcycles}, there are at most $\cn/2$ such
cycles, the sum of the corresponding $r_i$ is bounded by $ \cn
\cdot m/2$. This would already be an acceptable bound since $\cn$ is small
relatively to $m/n$. But this implies counting $m$ for all {\em
  small} cycles, independently of the periods of the cycles, which can
vary from 2 to $m\alpha$. The larger the period of the cycle, the less
we need to turn on the cycle to read the representative $r_i$.
Thus our worst case for counting the small cycles from period $1$ to
$m\alpha$ is when there is a maximum of smaller cycles at each step
$2\leq k \leq m \alpha$ and, by corollary \ref{nbsmallcycles}, this
maximum from $k-1$ to $k$ can only be increased by $n(k)$. Expressing the
representative of each such $n(k)$ additionnal cycles of period $k$,
requires $n(k)\frac{m}{k}.$ Thus, the expression of the
representatives of all the small cycles is bounded by $\frac 1 2 \sum_{i=1}^{m
  \alpha} \frac{n(i)\cdot m}{i} = \frac{\cn \cdot m}{2}.$ 

Eventually, as   $\mbox{wt}(\mathcal{C}) \leq \opt$, the result follows.
\end{proof}

We then {\em compress} $S_{\sigma}$ using the guaranteed {\em
  compression} algorithm of $\frac{38}{63}$ \cite{KPS94}, similarly to
the classical approaches related to the superstring approximation. We
define $\mbox{OPT}_{\sigma}$ as an optimal minimal superstring on
$S_{\sigma}$ and $\tau$ as the result of the compression algorithm on
$S_{\sigma}.$ The next lemma \cite{Blum:1994,vazirani} allows us to link
$\opts$ and $\opt$.

\begin{lemma} 
$\opts <\opt+\mbox{wt}(\mathcal{C})$
\label{lessopt}
\end{lemma}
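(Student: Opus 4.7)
The plan is to exhibit an explicit superstring $s'$ of $S_{\sigma}$ of length at most $\opt + \mbox{wt}(\mathcal{C})$; since $\opts$ is the minimum length of such a superstring, this will prove the lemma. The first step is a structural observation: each $\sigma_i$ both ends with $r_i$ (immediate from the definition) and begins with $r_i$. The latter follows by a short induction along the cycle, using the identity $\pr(u,v)\cdot v = u\cdot v'$ where $v = \ov(u,v)\cdot v'$: this shows $\pr(r_i, r_{i+1})\cdot r_{i+1}$ has $r_i$ as a prefix, and iterating propagates the property through the full concatenation. Hence $\sigma_i = r_i \tau_i = \tau_i' r_i$ with $|\tau_i| = |\tau_i'| = \mbox{wt}(c_i)$ and $|\sigma_i| = m + \mbox{wt}(c_i)$.

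Next, I would fix an optimal superstring $s^*$ of $S$ with $|s^*|=\opt$. Each $r_i$ occurs in $s^*$; let $p_i$ be its leftmost occurrence and relabel cycles so that $p_1 < p_2 < \cdots < p_k$. I then construct $s'$ by placing $\sigma_1, \ldots, \sigma_k$ in this order and merging each consecutive pair $\sigma_i, \sigma_{i+1}$ along exactly $|\ov(r_i, r_{i+1})|$ characters, which is legal because $\sigma_i$ ends with $r_i$ and $\sigma_{i+1}$ begins with $r_{i+1}$. The overlaps only modify how consecutive $\sigma$'s are glued, so each $\sigma_j$ remains a contiguous substring of $s'$, making $s'$ a superstring of $S_{\sigma}$.

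A direct length computation gives
$$|s'| = \sum_{i=1}^{k}|\sigma_i| - \sum_{i=1}^{k-1}|\ov(r_i,r_{i+1})| = m + \mbox{wt}(\mathcal{C}) + \sum_{i=1}^{k-1}|\pr(r_i,r_{i+1})|.$$
Since $r_i$ and $r_{i+1}$ co-occur in $s^*$ at positions $p_i < p_{i+1}$, their alignment there forces $|\pr(r_i, r_{i+1})| \leq p_{i+1} - p_i$, and summing telescopically yields $\sum_{i=1}^{k-1}|\pr(r_i,r_{i+1})| \leq p_k - p_1 \leq \opt - m$. Substituting gives $|s'| \leq \opt + \mbox{wt}(\mathcal{C})$, which is the desired bound.

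The main obstacle is the first step: proving that $\sigma_i$ begins (and not only ends) with $r_i$ is the less transparent half, yet the entire chaining construction and length accounting rely on $r_i$ appearing at both endpoints of $\sigma_i$. Once that structural fact is in hand, the remaining steps are straightforward bookkeeping.
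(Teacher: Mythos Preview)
The paper does not actually prove this lemma; it simply imports it from \cite{Blum:1994,vazirani}. Your argument is correct and is essentially the classical proof found in those references: establish that each $\sigma_i$ has $r_i$ as both a prefix and a suffix, order the representatives by their leftmost occurrence in an optimal superstring $s^*$ of $S$, overlap the $\sigma_i$ accordingly, and bound the resulting length via the telescoping sum $\sum |\pr(r_i,r_{i+1})| \le p_k - p_1 \le \opt - m$.

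Two minor remarks. First, the lemma in the paper is stated with a strict inequality, whereas your construction yields only $|s'| \le \opt + \mbox{wt}(\mathcal{C})$; the non-strict form is what the cited sources prove and what the paper actually uses in the next lemma, so this is a cosmetic discrepancy in the paper rather than a gap in your proof. Second, your length bookkeeping (e.g.\ $|\sigma_i| = m + \mbox{wt}(c_i)$ and $p_k - p_1 \le \opt - m$) uses that every read has the same length $m$; this is valid in the NGS setting of the present paper, but the general statement in \cite{Blum:1994} requires replacing $m$ by the appropriate $|r_i|$ at each step, which still goes through.
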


\noindent
By applying the compression algorithm on $S_{\sigma}$, we thus derive the following result:

\begin{lemma}
$|\tau|  \leq 2\opt + \frac{38}{63} \left( \frac{1-\alpha}{\alpha} \right) \opt + \frac{38}{126} \cn \cdot m$
\end{lemma}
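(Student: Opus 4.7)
The plan is to combine the three ingredients that have been built up: the bound on $|w_\sigma|$ from the preceding lemma, the bound $\opts < \opt + \mbox{wt}(\mathcal{C})$ from Lemma \ref{lessopt}, and the $\tfrac{38}{63}$-compression guarantee of \cite{KPS94} applied to $S_\sigma$. Since the compression algorithm ensures a fixed ratio between the reducible gap it achieves and the maximum reducible gap $|w_\sigma|-\opts$, its output $\tau$ satisfies an inequality of the form $|\tau|\le \lambda\,|w_\sigma|+(1-\lambda)\,\opts$ with $\lambda$ determined by $\tfrac{38}{63}$. So my first step would be to write this inequality down explicitly.

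Second, I would convert the two earlier bounds into bounds involving only $\opt$ and $\cn\cdot m$. Lemma \ref{lessopt} together with the standard fact $\mbox{wt}(\mathcal{C})\le\opt$ (already used in the previous lemma) gives $\opts \le 2\opt$, while the preceding lemma reads $|w_\sigma|\le(1+\tfrac{1}{\alpha})\opt+\tfrac{\cn\cdot m}{2}$. The small algebraic rewriting $1+\tfrac{1}{\alpha}=2+\tfrac{1-\alpha}{\alpha}$ is the trick that makes the leading $2\opt$ term come out cleanly once the two contributions are added.

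Finally, I would substitute both bounds into the convex combination and collect like terms. The $2\opt$ summand of the target then appears as the combination of the constant-in-$\alpha$ parts of $\opts\le2\opt$ and the "$2$" in the rewriting of $1+\tfrac{1}{\alpha}$, while the $\tfrac{38}{63}\cdot\tfrac{1-\alpha}{\alpha}\opt$ and $\tfrac{38}{126}\cn\cdot m$ summands arise by applying the compression coefficient to the residual $\tfrac{1-\alpha}{\alpha}\opt$ and $\tfrac{\cn\cdot m}{2}$ pieces of $|w_\sigma|$ respectively. There is no new combinatorial content beyond the two preceding lemmas; the only real care is the bookkeeping of the convex combination so that the factor $\tfrac{38}{63}$ attaches to the residuals $\tfrac{1-\alpha}{\alpha}\opt$ and $\tfrac{\cn\cdot m}{2}$ rather than to the constant term, which is precisely what the split $1+\tfrac{1}{\alpha}=2+\tfrac{1-\alpha}{\alpha}$ arranges.
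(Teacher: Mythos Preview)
Your proposal is correct and follows essentially the same line as the paper: bound $\opts\le 2\opt$ via Lemma~\ref{lessopt} and $\mbox{wt}(\mathcal{C})\le\opt$, apply the compression guarantee to the gap $|w_\sigma|-\opts$ (the paper phrases this as ``worst case $\opts=2\opt$'' and then compresses the remaining distance, which is exactly your convex-combination inequality), and use the rewriting $1+\tfrac{1}{\alpha}=2+\tfrac{1-\alpha}{\alpha}$ so that the $2\opt$ term separates and the factor $\tfrac{38}{63}$ lands on the residuals $\tfrac{1-\alpha}{\alpha}\opt$ and $\tfrac{\cn\cdot m}{2}$. The only cosmetic difference is that you state the compression step as an explicit convex combination $|\tau|\le\lambda|w_\sigma|+(1-\lambda)\opts$, whereas the paper argues it pictorially on a line; the substance is identical.
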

\begin{proof}
Lemma \ref{lessopt} gives $\opts <\opt+\mbox{wt}(\mathcal{C}) \leq 2
\opt$ (see Figure \ref{compression}). The distance from $\opts$ to $|w_{\sigma}|$ is greater than or
equal to $|w_{\sigma}|-{2\opt}.$ In the worst case it is equal, then
the compression algorithm applies a compression factor $38/63$ to this
distance, which leads to the result.

\begin{figure}[h]
\centering 
\includegraphics[width=6cm]{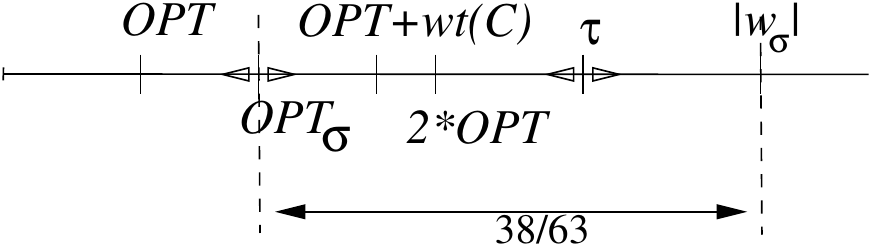}
\caption{Compressing $w_{\sigma}$ using the $38/63$ algorithm \cite{KPS94}}
 \label{compression}
\vspace{-0.3cm}
\end{figure}
\end{proof}

An important point is that $OPT \geq n$, since any superstring contains at least one base of each sequence. As $\cn = \mbox{\em perc}_{\alpha} \frac{n}{m}$, $\frac{38}{63} \frac{\cn m}{2} \leq \frac{ 38}{126} \mbox{\em perc}_{\alpha} \mbox{OPT}$

\begin{theorem}
\label{lastequa}
$$|\tau|  \leq 2\opt + \frac{38}{63} \left( \frac{1-\alpha}{\alpha} \right) \opt + \frac{38}{126} \mbox{\em perc}_{\alpha} \mbox{OPT}$$
\end{theorem}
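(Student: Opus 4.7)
The plan is to start directly from the bound established in the preceding lemma, namely
$$|\tau|  \leq 2\opt + \frac{38}{63} \left( \frac{1-\alpha}{\alpha} \right) \opt + \frac{38}{126}\, \cn \cdot m,$$
and rewrite only the third summand so that it is expressed as a multiple of $\opt$ instead of involving $\cn$ and $m$ directly. The first two terms already have the desired form and need no modification.

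By the definition introduced in Section~2, $\cn = \mbox{\em perc}_{\alpha} \frac{n}{m}$, which immediately gives $\cn \cdot m = \mbox{\em perc}_{\alpha} \cdot n$. I then invoke the elementary lower bound $\opt \geq n$ stated just before the theorem: since the superstring must contain each of the $n$ distinct reads as a substring, it must have at least one ``distinguishing'' position per read, so its length is at least $n$. Combining these two observations yields $\cn \cdot m \leq \mbox{\em perc}_{\alpha} \cdot \opt$, and multiplying by $\frac{38}{126}$ gives
$$\frac{38}{126}\, \cn \cdot m \;\leq\; \frac{38}{126}\, \mbox{\em perc}_{\alpha}\, \opt.$$
Substituting this inequality into the bound of the preceding lemma produces exactly the claimed inequality of Theorem~\ref{lastequa}.

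I do not anticipate any real obstacle here: the technical content, namely the splitting of the cycle decomposition into small and large cycles together with the careful accounting of how many turns around each cycle are needed to express its representative, is entirely absorbed in the preceding lemma. The only ingredient added at this stage is the trivial lower bound $\opt \geq n$, whose sole role is to convert the absolute quantity $\cn \cdot m$ that measures the contribution of small-period cycles into a relative fraction of $\opt$ controlled by the experimentally small parameter $\mbox{\em perc}_{\alpha}$ highlighted in Section~2.
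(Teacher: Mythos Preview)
Your proposal is correct and matches the paper's own argument essentially line for line: the paper also derives the theorem from the preceding lemma by invoking $\cn = \mbox{\em perc}_{\alpha}\,\frac{n}{m}$ together with the lower bound $\opt \geq n$ to replace $\frac{38}{126}\,\cn\cdot m$ by $\frac{38}{126}\,\mbox{\em perc}_{\alpha}\,\opt$. There is no additional idea in the paper beyond what you have written.
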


\begin{table}[!htbp]
{\small \begin{tabular}{|rrrrrrrr|}
\hline
period & nbseq &  cum. nbseq & $\alpha$ & $1+\frac{1}{\alpha}$ &    $2 + \frac{38}{63} \left( \frac{1-\alpha}{\alpha} \right)$ &  $\frac{38}{126} \cn \frac{m}{n}$ &  $\be$\\ \hline
1 & 4 & 4 & 0.0277778 & 37 & 23.1111 & 1.74749e-05 & 23.1111\\
2 & 6 & 10 & 0.0555556 & 19 & 12.254 & 3.0581e-05 & 12.254\\
$\ldots$ & $\ldots$ & $\ldots$ & $\ldots$ & $\ldots$ & $\ldots$ & $\ldots$  & $\ldots$ \\ 
32 & 23746 & 41326 & 0.888889 & 2.125 & 2.0754 & 0.00588868 & 2.08129\\
{\bf 33} & {\bf 98795} & {\bf 140121} & {\bf 0.916667} & {\bf 2.09091} & {\bf 2.05483} & {\bf 0.0189677} & {\bf 2.0738}\\
34 & 247451 & 387572 & 0.944444 & 2.05882 & 2.03548 & 0.0507631 & 2.08624\\
35 & 829535 & 1217107 & 0.972222 & 2.02857 & 2.01723 & 0.154306 & 2.17154\\
36 & 2485202 & 3702309 & 1 & 2 & 2 & 0.455893 & 2.45589\\
\hline
\end{tabular}
}

\caption{SRR069579 read set, 3702309 reads of size 36. $\be = 2 + \frac{38}{63} \left( \frac{1-\alpha}{\alpha} \right) + \frac{38}{126} \cn \frac{m}{n} $ }
\label{SRR069579}
\end{table}

\begin{table}[!htbp]
{\small \begin{tabular}{|rrrrrrrr|}
\hline
period & nbseq &  cum. nbseq & $\alpha$ & $1+\frac{1}{\alpha}$ &    $2 + \frac{38}{63} \left( \frac{1-\alpha}{\alpha} \right)$ &  $\frac{38}{126} \cn \frac{m}{n}$ &  $\be$\\ \hline

1&	4&	4&	0.03125&	33&	20.6984&	8.83862e-06&	20.6984\\
2&	8&	12&	0.0625&	17&	11.0476&	1.76772e-05&	11.0476\\

$\ldots$ & $\ldots$ & $\ldots$ & $\ldots$ & $\ldots$ & $\ldots$ & $\ldots$  & $\ldots$ \\ 

28&	30474&	61366&	0.875&	2.14286&	2.08617&	0.00518227&	2.09135\\
{\bf 29}&	{\bf 89474}&{\bf	150840}&{\bf	0.90625}&{\bf	2.10345}&{\bf	2.0624}&	{\bf 0.0119997}&{\bf	2.0744}\\
30&	341160&	492000&	0.9375&	2.06667&	2.04021&	0.0371279&	2.07734\\
31&	953389&	1445389&	0.96875&	2.03226&	2.01946&	0.105085	&2.12454\\
32&	2606944&	4052333&	1&	2&	2&	0.285099&	2.2851\\

\hline
\end{tabular}
}

\caption{ERR000009 read set, 4052333 reads of size 32}
\label{ERR000009}
\end{table}

\begin{table}[!htbp]
{\small \begin{tabular}{|rrrrrrrr|}
\hline
period & nbseq &  cum. nbseq & $\alpha$ & $1+\frac{1}{\alpha}$ &    $2 + \frac{38}{63} \left( \frac{1-\alpha}{\alpha} \right)$ &  $\frac{38}{126} \cn \frac{m}{n}$ &  $\be$\\ \hline
1 &	2 &	2 &	0.005	  & 201 &	122.032 & 4.80545e-06 &	122.032\\
100 &	23 &	 25 &	0.5&	3 &	2.60317 &	5.35808e-06 &	2.60318\\
$\ldots$ & $\ldots$ & $\ldots$ & $\ldots$ & $\ldots$ & $\ldots$ & $\ldots$  & $\ldots$ \\ 
195 &	38013&	54574&	0.975&	2.02564&	2.01547&	0.00067972&	2.01615\\
{\bf 196}&	{\bf 134284}&	{\bf 188858}&	{\bf0.98	}& {\bf 2.02041}&	{\bf 2.01231} &	{\bf 0.00232588} &	{\bf 2.01464}\\
197 &	473686 &	662544 &	0.985 &	2.01523 &	2.00919 &	0.00810323&	2.01729\\
198 &	1685038 &	2347582 &	0.99  &	2.0101 & 	2.00609 &	0.0285511 &	2.03464\\
199 &	5811666 &	8159248 &	0.995 &	2.00503 &	2.00303 &	0.0987212 &	2.10175\\
200 &	16944518 &	25103766 &	 1 & 	       2 &	      2 &	0.302286&   2.30229\\
\hline
\end{tabular}
}

\caption{SRR211279 read set, 25103766 reads of size 200}
\label{SRR211279}
\end{table}

\begin{table}[!htbp]
{\small \begin{tabular}{|rrrrrrrr|}
\hline
period & nbseq &  cum. nbseq & $\alpha$ & $1+\frac{1}{\alpha}$ &    $2 + \frac{38}{63} \left( \frac{1-\alpha}{\alpha} \right)$ &  $\frac{38}{126} \cn \frac{m}{n}$ &  $\be$\\ \hline

1&	1&	1&	0.0102041&	99&	60.5079&	7.13344e-06&	60.5079\\
50&	4&	5&	0.510204&	2.96	&2.57905&	7.70411e-06&	2.57906\\

$\ldots$ & $\ldots$ & $\ldots$ & $\ldots$ & $\ldots$ & $\ldots$ & $\ldots$  & $\ldots$ \\ 

94&	17083&	28491&	0.959184&	2.04255&	2.02567&	0.00218336&	2.02785\\
{\bf 95} &	{\bf 65228} &{\bf 93719} &{\bf 0.969388} &	{\bf 2.03158}&	{\bf 2.01905} &	{\bf 0.00708125} &	{\bf 2.02613}\\
96&	302973&	396692&	0.979592	&2.02083&	2.01257&	0.0295941&	2.04216\\
97&	942267&	1338959&	0.989796&	2.01031&	2.00622&	0.098889	&2.10511\\
98&	2804284&	4143243&	1&	2&	2&	0.303013&	2.30301\\

\hline
\end{tabular}
}

\caption{SRR959239 read set,  4143243 reads of size 98}
\label{SRR959239}
\end{table}

\section{Experimental results}
\label{exprest}

We present experimental results for the sets of reads SRR069579 (Table
\ref{SRR069579}), ERR000009 (Table \ref{ERR000009}), SRR211279 (Table
\ref{SRR211279}), and SRR959239 (Table \ref{SRR959239}). 

In each table, for each period $i$ from $1$ to $m$ we show : (a)
$n(i)$, (b) the cumulative number of sequences, (c) the value of
$\alpha$ corresponding to $i/m$, (d) the value of $1+\frac{1}{\alpha}$, (e) $2
+ \frac{38}{63} \left( \frac{1-\alpha}{\alpha} \right)$ which
corresponds to the term of equation \ref{lastequa} due to the large
cycles, (f) $\frac{38}{126} \cn \frac{m}{n}$ which is the part of the
final ratio brought by the small cycles, and eventually (g) $\be  = 2 + \frac{38}{63} \left( \frac{1-\alpha}{\alpha} \right) + \frac{38}{126} \cn \frac{m}{n}$, the final ratio that can be reached by using the value of $\alpha$ from the
previous line in the table.

The resulting approximation ratios on the read sets cited above are
respectively 2.0738, 2.09, 2.01464 and 2.02623.





{\small \bibliographystyle{plain}

}
\end{document}